\documentclass[runningheads]{llncs}
\usepackage[utf8]{inputenc}
\usepackage{mathtools}
\usepackage{amsfonts, amssymb, amsbsy}
\usepackage{subfigure}
\usepackage{color}
\usepackage{hyperref}
\usepackage{cite}
\usepackage[font=small]{caption}
\usepackage[basic]{complexity}
\usepackage{times}

\newlength{\toplength}
\newlength{\bottomlength}
\setlength{\toplength}{1.5ex}
\setlength{\bottomlength}{0.5ex}

\title{Simultaneous Embeddability of Two Partitions}

\author{Jan Christoph Athenstädt\inst{1} \and Tanja Hartmann\inst{2} \and Martin Nöllenburg\inst{2}}

\institute{Department of Computer and Information Science, University of Konstanz, Germany \and Institute of Theoretical Informatics, Karlsruhe Institute of Technology (KIT), Germany}

\begin{document}
\maketitle
\begin{abstract}
  We study the simultaneous embeddability of a pair of partitions of
  the same underlying set into disjoint blocks. Each element of the set is mapped
  to a point in the plane and each block of either of the two
  partitions is mapped to a region that contains exactly those points that belong to the elements in
  the block and that is bounded by a simple closed curve.  We establish three main classes of simultaneous
  embeddability (\emph{weak}, \emph{strong}, and \emph{full}
  embeddability) that differ by increasingly strict well-formedness
  conditions on how different block regions are allowed to intersect.
  We show that these simultaneous embeddability classes are closely
  related to different planarity concepts of hypergraphs.  For each
  embeddability class we give a full
  characterization. %
  We show that (i) every pair of partitions has a weak simultaneous
  embedding, (ii) it is \NP-complete to decide the existence of a
  strong simultaneous embedding, and (iii) the existence of a full
  simultaneous embedding can be tested in linear time.
\end{abstract}

\section{Introduction}

Pairs of partitions of a given set of objects occur naturally when 
evaluating two alternative clusterings in the field of data analysis and data mining.
A \emph{clustering} partitions
a set of objects into \emph{blocks} or \emph{clusters}, such that
objects in the same cluster are more similar (according
to some notion of similarity) than objects in different clusters.
There are a multitude of clustering algorithms that use, e.g., an underlying graph structure or
an attribute-based distance measure to define similarities.
Many
algorithms also provide configurable parameter settings.
Consequently, different algorithms return different
clusterings and judging which clustering is the most meaningful with
respect to a certain interpretation of the data must be done by a
human expert.
For a structural comparison of two clusterings several
numeric measures exist~\cite{ww-ccao-07}, however, a single numeric value
hardly shows where the clusterings agree or disagree.  Hence, a data
analyst may want to 
compare different clusterings
visually, which motivates the study of simultaneous embeddability of
two partitions.  %

We provide  fundamental characterizations and
complexity results regarding the simultaneous embeddability of a pair of
partitions.  While simultaneous embeddability can generally be
defined for any number $k\ge 2$ of partitions, we
focus on the basic case of embedding \emph{two} partitions, which is
also the most relevant one in the data analysis application.
We propose to
embed two alternative partitions of the same set $U$
into the plane $\mathbb R^2$ by mapping each element of $U$ to a unique point and
each block (of either of the two partitions) to a region bounded by a
simple closed curve. Each block region must contain all points that
belong to elements in that block and no point whose element belongs to a
different block. 
Hence, in total, each point lies inside two block regions.

A simultaneous embedding of two partitions shares certain properties
with set visualizations like Euler or Venn
diagrams~\cite{c-gdaevd-07,ffh-edg-08,saa-favos-09}.  Its readability
will be affected by well-formedness
conditions for the intersections of the different block regions.
Accordingly, we define a (strict) hierarchy of embeddability classes based on
increasingly 
tight well-formedness conditions: \emph{weak}, \emph{strong}, and \emph{full} embeddability.
We show that (i) any two partitions
are weakly embeddable, (ii) the decision problem for strong embeddability is \NP-complete, and (iii) there is a linear-time decision algorithm for full
embeddability. %
We fully
characterize the embeddability classes in terms of the existence of a
planar support
(strong embeddability) or in terms of the
planarity of the %
bipartite map 
(full
embeddability). Interestingly, both concepts are closely related to
hypergraph embeddings and different notions of hypergraph planarity.
Our \NP-completeness result %
implies that
vertex-planarity testing of 2-regular 
hypergraphs is also \NP-complete.

\subsection{Related Work}

In information visualization there are a large variety of techniques for visualizing clusters of objects, some of which simply map objects to (colored) points so that spatial proximity indicates object similarity~\cite{bsldhc-dvwms-08,k-sm-01}, others explicitly visualize clusters or general sets as regions in the plane~\cite{cpc-bsrrwioev-09,saa-favos-09}. These approaches are visually similar to Euler diagrams~\cite{c-gdaevd-07,ffh-edg-08}, however, they do not give hard guarantees on the final set layout, e.g., in terms of intersection regions or connectedness of regions, nor do they specifically consider the simultaneous embedding of two or more clusterings or partitions.

\emph{Clustered planarity} is a concept in graph drawing that combines a planar graph layout with a drawing of the clusters of a single hierarchical clustering. Clusters are represented as regions bounded by simple closed and pairwise
crossing-free curves. Such a layout is called \emph{c-planar} if no
edge crosses a region boundary more than once~\cite{fce-pcg-95}.  

The simultaneous embedding of two planar graphs on the same vertex set is a topic that is well studied in the graph drawing literature, see the recent survey of Bläsius et al.~\cite{bkr-sepg-13}. In a simultaneous graph embedding each vertex is located at a unique position and edges contained in both graphs are represented by the same curve for both graphs. The remaining (non-shared) edges are embedded so that each graph layout by itself is crossing-free, but edges from the first graph may cross edges in the second graph. %

Some of our results and concepts in this paper can be seen as a generalization of simultaneous graph embedding to  simultaneous hypergraph embedding if we consider blocks as hyperedges: all vertices are mapped to unique points in the plane and two hyperedges, represented as regions bounded by simple closed curves, may only intersect if they belong to different hypergraphs or if they share common vertices. Several concepts for visualizing a single hypergraph are known~\cite{jp-hpcdvd-87,m-dh-90,kks-sdh-09,bkms-psh-10,bcps-psh-12}, 
but to the best of our knowledge the simultaneous layout of two or more hypergraphs has not been studied.

\subsection{Preliminaries}\label{sec:prelim}
Let $U = \{u_1, \dots, u_m\}$ be a finite universe.  A
\emph{partition} $\mathcal P = \{B_1, \dots, B_n\}$ of~$U$ groups the
elements of $U$ into disjoint \emph{blocks}, i.e., every element $u
\in U$ is contained in exactly one block $B_i \in \mathcal P$.  In
this paper, we consider pairs $\{\mathcal{P}_0, \mathcal{P}_1\}$ of
partitions of the same universe $U$, i.e., each element $u \in U$ is
contained in one block of $\mathcal{P}_0$ and in one block of
$\mathcal{P}_1$. In the following
we often omit to mention~$U$ explicitly.

Let $\mathcal{S}$ be a collection of subsets of~$U$. An
\emph{embedding} $\Gamma$ of $\mathcal{S}$ maps every element \mbox{$u \in
U$} to a distinct point $\Gamma(u) \in \mathbb R^2$ and every set $S
\in \mathcal{S}$ to a simple, bounded, and closed region $\Gamma(S)
\subset \mathbb R ^2$ such that $\Gamma(u) \in \Gamma(S)$ if and only
if $u \in S$. Moreover, we require that each contiguous intersection
between the boundaries of two regions is in fact a \emph{crossing
  point} $p \in \mathbb R^2$, i.e., the local cyclic order of the
boundaries alternates around $p$.  A \emph{simultaneous embedding}
$\Gamma$ of a pair of partitions $\{\mathcal{P}_0, \mathcal{P}_1\}$ is
an embedding of the union $\mathcal{P}_0 \cup \mathcal{P}_1$ of the
two partitions. We define $R_B = \Gamma(B)$ as the \emph{block region}
of a block $B$ and denote its boundary by $\partial
R_B$. Figure~\ref{fig:embedding-examples} shows examples of
simultaneous embeddings in the three different embedding classes to be
defined in Section~\ref{s:mce}.

\begin{figure}[tb]
	\centering
		\subfigure[\label{sfg:weak}weak embedding]{\includegraphics[page=1,width=.275\textwidth]{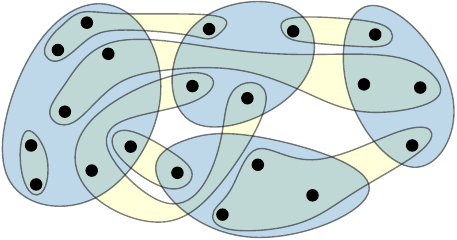}}
		\hfill
		\subfigure[\label{sfg:strong}strong embedding]{\includegraphics[page=2,width=.275\textwidth]{embeddings}}
		\hfill
		\subfigure[\label{sfg:full}full embedding]{\includegraphics[page=3,width=.275\textwidth]{embeddings}}
	\caption{Examples of simultaneous embeddings of two partitions.}
	\label{fig:embedding-examples}
\end{figure}

A simultaneous embedding $\Gamma$ induces a subdivision of the plane
and we can derive a plane multigraph $G_\Gamma$ by introducing a node
for each intersection of two boundaries and an edge for each section
of a boundary that lies between two intersections.  Furthermore, a
boundary without intersections is replaced by a node with a self loop
nested inside its surrounding face.  We call $G_\Gamma$ the
\emph{contour graph} of $\Gamma$ and its dual graph $G_\Gamma^*$ the
\emph{dual graph} of $\Gamma$.  The faces of~$G_\Gamma$ belong to
zero, one, or two block regions.  We call a face that belongs to no
block region a \emph{background face}, a face that belongs to a single
block region a \emph{linking face}, and a face that belongs to two
block regions an \emph{intersection face}.  Only intersection faces
contain points corresponding to elements in the universe, and no two
faces of the same type are adjacent in the contour graph.

Alternatively, the union of the two partitions $\mathcal{P}_0 \cup
\mathcal{P}_1$ can also be seen as a
hypergraph $H=(U,\mathcal{P}_0 \cup \mathcal{P}_1)$, where every
element~$u \in U$ is a vertex and every block defines a
hyperedge, i.e., a non-empty subset of $U$. The hypergraph $H$ is \emph{2-regular} since every vertex is
contained in exactly two hyperedges. %
We denote $H = H(\mathcal{P}_0, \mathcal{P}_1)$ as the
\emph{corresponding hypergraph} of the pair of partitions
$\{\mathcal{P}_0, \mathcal{P}_1\}$.

\emph{Hypergraph supports}~\cite{kks-sdh-09} play an important role in
hypergraph embeddings and their planarity.  A support of a hypergraph
$H = (V,\mathcal{S})$ is a graph $G_p = (V, E)$ on the vertices of
$H$, such that the \emph{induced subgraph} $G_p[S]$ of every hyperedge
$S \in \mathcal{S}$ is connected.  We extend the concept of supports
to pairs of partitions, i.e., we say that a graph $G_p = (V, E)$ is a
support for $\{\mathcal{P}_0, \mathcal{P}_1\}$, if it is a support of
$H(\mathcal{P}_0, \mathcal{P}_1)$.

We call a support \emph{path based}, if the induced subgraphs of all
hyperedges are paths,\footnote{Brandes et al.~\cite{bcps-psh-12} used
  a slightly different definition and called a support \emph{path
    based} if the induced subgraph of each hyperedge has a Hamiltonian
  path.}  and \emph{tree based}, if all hyperedge-induced
subgraphs are trees, i.e., they do not contain any cycles. For any support
$G_p$ of a pair of partitions~$\{\mathcal P_0,\mathcal P_1\}$ we can
always create a tree-based support $G'_p$ by removing edges from cycles:
Suppose there exists a block $B \in \mathcal{P}_0$ such that
$G_p[B]$ contains a cycle~$K$.
If the vertices in~$K$ are also contained in a common block of
$\mathcal{P}_1$,
we can just remove a random edge from $K$ without
destroying the support property. 
Otherwise, we can remove an
edge from $K$ that connects vertices in two different blocks of
$\mathcal{P}_1$ without destroying the support property.

The \emph{bipartite map} $G_b(H)$ of a hypergraph $H = (V,\mathcal S)$
is defined as the bipartite graph~$G_b(H)= (V\cup \mathcal S,E_b)$
that has a node for each vertex in~$V$ and for each hyperedge
in~$\mathcal S$~\cite{w-hvbm-75}.  A node~$v\in V$ is adjacent to a
node~$S\in \mathcal S$ if~$v\in S$. We say that~$G_b(H)$ is the
bipartite map of a pair of partitions $\{\mathcal{P}_0,
\mathcal{P}_1\}$ if~$H = H(\mathcal{P}_0, \mathcal{P}_1)$.

Finally, we define the \emph{block intersection graph} $G_s(\mathcal P_0, \mathcal P_1)$ as the
graph with vertex set $V_s = \mathcal P_0 \cup \mathcal P_1$ and edge
set $E_s = \{\{B,B'\} \mid B \cap B' \ne \emptyset\}$. Thus $G_s$ has
a vertex for each block and an edge between any two blocks that share
a common element.  Since only blocks of different partitions can
intersect, we know that $G_s$ is bipartite.

\section{The Main Classes of Embeddability}\label{s:mce}

We define three main concepts of simultaneous embeddability for pairs of
partitions.  We will see that these
concepts induce a hierarchy of embeddability classes of pairs of
partitions. 

\subsection{Weak Embeddability}
\label{sub:weak_embeddability}

We begin with \emph{weak embeddability}, which is the
most general concept.  
\begin{definition}[Weak Embeddability]\label{def:weak}
  A simultaneous embedding of two partitions is \emph{weak} if
  no two block regions of the same partition intersect.  Two
  partitions are \emph{weakly embeddable} if they have a weak
  simultaneous embedding.
\end{definition}
Prohibiting intersections of block regions of the same
partition is our first well-formed\-ness condition.
A weak embedding emphasizes the fact that
the blocks in each partition are disjoint. %
Since the blocks of any partition are disjoint by definition, it is not surprising that any
pair of partitions is weakly embeddable (see Fig.~\ref{sfg:weak} for an example).

\begin{theorem}\label{thm:weakly_embeddable}
  Any two partitions of a common universe are weakly embeddable on any point set.%
\end{theorem}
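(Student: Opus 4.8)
The plan is to construct a weak simultaneous embedding explicitly, given an arbitrary point set $\{\Gamma(u_1),\dots,\Gamma(u_m)\}$ for the universe $U$. The key observation is that since each of $\mathcal P_0$ and $\mathcal P_1$ is a partition, the blocks within one partition are pairwise disjoint as sets of elements, and we have total freedom in how block regions of \emph{different} partitions interact. So I would handle the two partitions one at a time and on two different "length scales."

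First I would fix a small parameter $\varepsilon>0$ smaller than half the minimum pairwise distance among the points, so that the closed $\varepsilon$-balls $D_u$ around the points are pairwise disjoint. For the first partition $\mathcal P_0$, for each block $B\in\mathcal P_0$ take the region $\Gamma(B)$ to be a simple closed region enclosing exactly the balls $D_u$ for $u\in B$ and no other point $\Gamma(u')$. This can be done, e.g., by taking a thin "tube" (a slightly fattened tree or path) connecting the points of $B$; since distinct blocks of $\mathcal P_0$ contain disjoint point sets, one can route these tubes to be pairwise disjoint — in fact one can simply enclose each block's points in a region that stays within the union of the $D_u$'s plus thin corridors chosen to avoid all other points and all other $\mathcal P_0$-tubes. (Concretely, connect the points of $B$ in some order by disjoint polygonal corridors; planarity of this routing is not required since overlaps with $\mathcal P_1$-regions are allowed, only $\mathcal P_0$-regions must be mutually disjoint, which is easy to arrange by nesting corridors thinly.) Then do the symmetric construction for $\mathcal P_1$, using even thinner corridors of width $\ll\varepsilon$ routed close to the $\mathcal P_0$-corridors or straight through them, so that $\mathcal P_1$-block regions are again pairwise disjoint but are allowed to cross $\mathcal P_0$-block regions arbitrarily.

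It then remains to check the formal requirements from the preliminaries: each $\Gamma(B)$ is simple, bounded, and closed and contains $\Gamma(u)$ iff $u\in B$ (immediate by construction, since each corridor system for a block stays within a neighborhood that captures exactly that block's points); no two block regions of the same partition intersect (by the disjoint-routing argument); and every boundary-boundary contact is a genuine transversal crossing. The last point is the only mild technicality: when a thin $\mathcal P_1$-corridor passes through a $\mathcal P_0$-region it enters and leaves through two transversal crossings, and by choosing the corridors in "general position" (finitely many regions, piecewise-linear boundaries) we can ensure no tangencies and no triple points, so every contiguous boundary intersection is a crossing point with alternating local cyclic order.

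The main obstacle is not any deep combinatorial content — weak embeddability imposes essentially no global constraint — but rather packaging the construction cleanly: one must argue that the within-partition corridors can be routed pairwise disjointly in the plane even though the point sets of different blocks of $\mathcal P_0$ are interleaved geometrically, and that this can be done simultaneously for the prescribed (adversarial) point set. I expect the cleanest way to do this is to observe that for a fixed partition the disjoint-union structure lets us treat each block independently inside its own thin tube, and a thin enough tube around a spanning tree of a block's points — drawn to hug short corridors between consecutive points and detour around any foreign point — can always be made to avoid all foreign points and all previously drawn tubes, since at each stage only finitely many obstacles of positive distance must be avoided.
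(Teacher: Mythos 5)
Your proposal is correct and follows essentially the same route as the paper: build a spanning tree per block, draw the resulting forest without crossings on the prescribed points, fatten the edges into block regions, and treat the two partitions independently since only same-partition regions must be disjoint. The only difference is that where you hand-roll the incremental corridor-routing argument (which works, since a plane forest does not separate the plane), the paper simply cites the Pach--Wenger result that any planar graph can be drawn crossing-free on any fixed point set.
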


\begin{proof}
A spanning forest (in fact, any planar graph) on~$n$ nodes can always be drawn in a planar way on any fixed set of~$n$ points in the plane~\cite{pw-epgfvl-01}. Let now $\mathcal{P}$ be a partition. We choose arbitrary, but distinct points in the plane for the elements of $U$. We then generate a spanning tree on the elements in each block and embed the resulting forest in a planar way on the points. Slightly inflating the thickness of the edges of the trees yields simple bounded block regions. We can do this independently for a second partition on the same points and obtain a weak simultaneous embedding.\qed
\end{proof}
Although the concept of weak embedding does not seem to provide
interesting insights into the structure of a given pair of
partitions, it guarantees at least the existence of a simultaneous embedding for any
pair of partitions that is more meaningful than an arbitrary
embedding.  An obvious drawback of weak embeddings is that the block
regions of disjoint blocks are allowed to intersect, as long as both
blocks belong to different partitions---even if they do not share
common elements.  

\subsection{Strong Embeddability} 
\label{sub:strong_embeddability}

Following the general idea of Euler
diagrams~\cite{c-gdaevd-07}, which do not show regions corresponding
to empty intersections, we establish a stricter concept of
embeddability.  In a strong embedding block regions may only
intersect if the corresponding blocks have at least one element in
common, and even more, each intersection face of the contour graph
must actually contain a point, see Fig.~\ref{sfg:strong}. This is our second
well-formedness condition.
\begin{definition}[Strong Embeddability]\label{def:strong}
  A simultaneous embedding $\Gamma$ of two partitions is
  \emph{strong} if each intersection face of the corresponding contour
  graph contains a point $\Gamma(u)$ for some $u \in U$.  Two
  partitions are \emph{strongly embeddable} if they have a strong
  simultaneous embedding.
\end{definition}
Obviously, a strong embedding is also weak, since blocks of the same
partition have no common elements, and thus, cannot form
intersection faces.
The class of strongly embeddable pairs of partitions is characterized by Theorem~\ref{thm:se_vertex_planar}; we show in Section~\ref{s:cdse} that deciding the strong embeddability of a pair of partitions is \NP-complete.
\begin{theorem}\label{thm:se_vertex_planar}
A pair of partitions of a common universe is strongly embeddable if
and only if it has a planar support.
\end{theorem}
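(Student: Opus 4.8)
The plan is to prove both directions of the equivalence by converting between a strong simultaneous embedding and a planar support.

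\textbf{From a strong embedding to a planar support.}
Suppose $\Gamma$ is a strong simultaneous embedding of $\{\mathcal P_0,\mathcal P_1\}$. I would work inside the contour graph $G_\Gamma$, whose faces are background, linking, or intersection faces. The points $\Gamma(u)$ all lie in intersection faces. First I would place, for each intersection face $f$, a ``hub'' vertex and connect it by disjoint curves to every point $\Gamma(u)$ lying in $f$; this stays inside $f$ and hence introduces no crossings. Next, for each block $B$, I want the subgraph induced on the vertices of $B$ to be connected. The region $R_B = \Gamma(B)$ is connected, and its interior is tiled by intersection faces (of $B$ with blocks of the other partition) and linking faces (belonging to $B$ only). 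By routing edges through the linking faces of $B$ and through shared intersection faces, I can connect the hubs of all intersection faces contained in $R_B$ into a connected subgraph that stays within $R_B$; since $R_B$ has no holes (its boundary is a single simple closed curve) this is always possible. Doing this simultaneously for all blocks, the only possible crossings are between the ``$B$-network'' and the ``$B'$-network'' inside a common intersection face $R_B\cap R_{B'}$, but there we can route both networks through the shared hub vertex so that they meet only at that vertex. Suppressing the hub and degree-$2$ vertices yields a graph $G_p$ on vertex set $U$ drawn in the plane without crossings whose restriction to each block is connected; that is exactly a planar support. The crossing-freeness needs care because many blocks overlap; the cleanest bookkeeping is to observe that every edge I draw lies in the closure of $R_B$ for exactly one block $B$ (if it lies in $R_B\cap R_{B'}$ I assign it to, say, the partition-$0$ block), and edges assigned to different blocks of the same partition live in disjoint regions, so the whole drawing is plane.

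\textbf{From a planar support to a strong embedding.}
Conversely, suppose $G_p$ is a planar support; by the remark in the preliminaries I may assume it is tree based, so each $G_p[B]$ is a tree. Fix a planar embedding of $G_p$ and draw it with the vertices of $U$ as points. For each block $B$, the induced tree $G_p[B]$ is a connected plane subgraph, so a sufficiently thin neighborhood of it is a topological disk; I take $R_B$ to be this thin ``fattening.'' For partition $\mathcal P_0$ I fatten all its block-trees with a very small width $\varepsilon_0$; the trees $G_p[B]$ for $B\in\mathcal P_0$ are vertex-disjoint, so for small enough $\varepsilon_0$ their fattenings are disjoint simple regions, each containing exactly the points of $B$. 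I do the same for $\mathcal P_1$ with an even smaller width $\varepsilon_1\ll\varepsilon_0$. A region $R_B$ ($B\in\mathcal P_0$) and a region $R_{B'}$ ($B'\in\mathcal P_1$) can intersect only near a common vertex $u\in B\cap B'$ (a leaf or internal node of both trees), and by choosing the fattenings to be, say, round tubes around the tree edges and small disks around the vertices, I can arrange that near each such shared vertex the two regions overlap in a small face containing $\Gamma(u)$, and that this is the \emph{only} place they meet. Then every intersection face of the contour graph contains at least one point of $U$, i.e.\ the embedding is strong; and boundary contacts can be perturbed into genuine crossings so the crossing-point condition holds. One must also check that we create no spurious intersection faces between two blocks that do \emph{not} share a vertex — but since such block-trees are disjoint and the fattenings are thin, their regions are disjoint.

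\textbf{Main obstacle.}
The routine parts are the fattening constructions; the delicate part is the forward direction — extracting a \emph{planar} (crossing-free) support from an embedding in which regions of different partitions may overlap heavily. The key technical point is controlling where the support edges are allowed to go: each support edge must be confined to a single block region, and within a shared intersection face the two partitions' edges must be made to pass through a common point rather than cross transversally. Making this simultaneous routing precise — in particular arguing that the linking and intersection faces of a single block region form a connected ``strip'' through which one can route a spanning connected subgraph of that block without leaving the region — is where the real work lies. Everything else follows by standard thickening and perturbation arguments.
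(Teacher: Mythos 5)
Your proposal is correct and follows essentially the same route as the paper: building the support inside the faces of the contour graph (hubs/centers in intersection faces, connections routed through linking faces, the two partitions' networks meeting only at a shared vertex) and, conversely, fattening a tree-based planar support. The only cosmetic difference is that the paper picks an existing point $\Gamma(u)$ in each intersection face as the ``center'' instead of adding a new hub to be suppressed later, which sidesteps the slightly delicate contraction step at the end of your forward direction.
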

\begin{proof}
  Let~$\{\mathcal P_0,\mathcal P_1\}$ be a pair of partitions
  and let~$G_\Gamma$ be the contour graph resulting from a strong
  embedding~$\Gamma$ of $\{\mathcal P_0,\mathcal P_1\}$.  We construct
  a planar support of~$\{\mathcal P_0,\mathcal P_1\}$ along~$G_\Gamma$
  as follows.
  First recall that the elements of the universe, which correspond to
  the nodes in a support, are represented in~$\Gamma$ by points that
  are drawn inside intersection faces.  Vice versa,
  since~$\Gamma$ is strong, each intersection face contains at least
  one point.  Hence, we choose one point in each intersection face as
  the \emph{center} of this face.
  We now create a dummy vertex for each linking face (observe that one block region may induce several linking faces) and link it to the centers of all adjacent intersection faces. The resulting graph is a subgraph of the dual graph of the contour graph~$G_\Gamma$ and therefore planar. We now connect all remaining vertices in a star-like fashion to the center of their intersection face, routing the edges in a non-crossing way.
  We finally remove the dummy vertices by merging them to an adjacent center, linking all adjacent vertices to that center. This graph remains planar. It also has the support property, since all intersection and linking faces of any block region are connected into a single component, and with them all vertices of that block region.

  Now we construct a strong embedding from a planarly
  embedded support of~$\{\mathcal P_0,\mathcal P_1\}$.  To this end,
  we first construct a tree-based support by deleting edges from
  cycles as described in Section~\ref{sec:prelim}.  Then, we simply
  inflate the edges of each block-induced subtree.  Since the
  underlying support is embedded in a planar way, this yields a simple
  block region for every block in $\{\mathcal P_0,\mathcal P_1\}$
  such that two block regions only intersect at the positions of the
  nodes. Hence, the constructed block regions together with the
  nodes of the support form a strong embedding of~$\{\mathcal
  P_0,\mathcal P_1\}$.  We note that the support graph as a planar graph can in fact be embedded on any point set~\cite{pw-epgfvl-01}. Hence, a strongly embeddable pair of partitions can be strongly embedded on any point set.%
\qed
 \end{proof}

 \subsection{Full Embeddability}
 \label{sub:full_embeddability}
 
 In a strong embedding, a single block region may still cross other block regions and intersect the same block regions several times forming distinct intersection faces---as long as each intersection face contains at least one common point.
 The last of our three embeddability classes prevents this behavior and requires that the block regions form a collection of \emph{pseudo-disks}, i.e., the boundaries of every pair of regions intersect at most twice and the boundaries of two nested regions do not intersect.
 See Fig~\ref{sfg:full} for an example.
 This implies in particular that every block intersection is connected, which is a well-formedness condition widely used in the context of Euler diagrams~\cite{c-gdaevd-07}, and that block regions do not cross and are thus more locally confined.
 \newcommand{\defFullEmbed}{A simultaneous embedding of two partitions is
   \emph{full} if it is strong and additionally the following holds.
   Two block regions have at most one intersection face in common in the
   corresponding contour graph. Furthermore, if an
   intersection face results from two nested blocks, it is adjacent
   to only one linking face, and otherwise, it is adjacent to exactly
   two linking faces.  Two partitions are fully embeddable if they
   have a full simultaneous embedding.}
\renewcommand{\defFullEmbed}{A simultaneous embedding of two partitions is
   \emph{full} if it is a strong embedding and the regions form a collection of pseudo-disks. Two partitions are fully embeddable if they
   have a full simultaneous embedding.}
\begin{definition}[Full Embeddability]\label{def:full}
  \defFullEmbed
\end{definition}
Using a linear-time algorithm for planarity testing~\cite{ht-ept-74},
the following characterization of fully embeddable pairs of
partitions directly implies a linear-time algorithm for deciding full embeddability.
\newcommand{\lemFullEmbedChar}{A pair of partitions of a common
  universe is fully embeddable if and only if its bipartite map is
  planar.}

\begin{theorem}\label{thm:charfull}
  \lemFullEmbedChar
\end{theorem}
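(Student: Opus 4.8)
The plan is to prove the ``iff'' by going through the block intersection graph $G_s$, in three moves: (i) reduce planarity of the bipartite map to planarity of $G_s$, (ii) build a full embedding from a planar $G_s$, and (iii) extract a planar drawing of $G_s$ from a full embedding.

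\medskip\noindent\emph{Step 1: $G_b$ planar $\iff$ $G_s$ planar.} Since $H(\mathcal P_0,\mathcal P_1)$ is $2$-regular, every vertex node $u$ of $G_b$ has degree exactly two, being adjacent precisely to the block of $u$ in $\mathcal P_0$ and the block of $u$ in $\mathcal P_1$. Suppressing all of these degree-$2$ nodes (the inverse of subdividing an edge, so planarity is unchanged) turns $G_b$ into the multigraph on vertex set $\mathcal P_0\cup\mathcal P_1$ having one edge $\{B,B'\}$ for each common element $u\in B\cap B'$; its underlying simple graph is exactly $G_s(\mathcal P_0,\mathcal P_1)$ (recall $E_s=\{\{B,B'\}\mid B\cap B'\neq\emptyset\}$). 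As neither edge subdivisions nor parallel edges affect planarity, $G_b$ is planar iff $G_s$ is planar, and it suffices to characterise full embeddability through planarity of $G_s$.

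\medskip\noindent\emph{Step 2: $G_s$ planar $\Rightarrow$ fully embeddable.} Fix a planar embedding of $G_s$ and thicken it. Replace each block node $B$ by a small disk and grow it along its incident edges into an ``octopus'' region $R_B$, arranging that along every edge $\{B,B'\}\in E_s$ the tentacles of $R_B$ and $R_{B'}$ share one common strip. On the strip of $\{B,B'\}$ place the points $\Gamma(u)$ in a row, one for each $u\in B\cap B'$; every remaining element lies on a unique edge of $G_s$ and gets its slot there. Each $R_B$ is a thickened tree, hence a topological disk containing exactly the points of $B$. Two regions $R_{B_0},R_{B_1}$ intersect only when $\{B_0,B_1\}\in E_s$, and then they overlap in the single lens on that strip, their boundaries crossing exactly twice, with that one intersection face containing precisely the nonempty set $B_0\cap B_1$. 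So the result is a strong embedding into pseudo-disks, i.e.\ a full embedding; in particular blocks of the same partition share no element and their regions are disjoint, consistent with $G_s$ being bipartite.

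\medskip\noindent\emph{Step 3: fully embeddable $\Rightarrow$ $G_s$ planar.} Given a full embedding $\Gamma$, draw $G_s$ as follows. In each region $R_B$ pick a point $c_B$ inside a linking face of $B$ (such a face exists unless $R_B$ is nested inside a single other region and equals that one intersection face — there we may take any point), and place the node $B$ at $c_B$. By the pseudo-disk property, whenever $\{B,B'\}\in E_s$ the regions $R_B,R_{B'}$ overlap in exactly one intersection face $F_{B,B'}$, a topological disk whose boundary is one arc of $\partial R_B$ and one arc of $\partial R_{B'}$, and distinct edges of $G_s$ use distinct such faces. Route $\{B,B'\}$ from $c_B$ through $R_B$ to a point of $\partial F_{B,B'}\cap\partial R_{B'}$, across $F_{B,B'}$, and through $R_{B'}$ to $c_{B'}$. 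Inside any fixed $R_B$ the portions of all edges incident to $B$ form a star from $c_B$ to the pairwise disjoint intersection faces sitting on $\partial R_B$, hence can be drawn crossing-free; and each intersection face is crossed by at most one edge. Thus $G_s$, and therefore $G_b$, is planar.

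\medskip\noindent I expect the main obstacle to be the forward direction's reliance on the pseudo-disk (full, not merely strong) hypothesis: it is exactly this that guarantees $B_0\cap B_1$ lives in a \emph{single} intersection face $F_{B_0,B_1}$, which both lets us route one edge of $G_s$ through each face and explains why a strong embedding (where $B_0\cap B_1$ may be split over several intersection faces) is not enough. The accompanying technical point is the crossing-free routing of the edge-star inside a region and the choice of $c_B$ when $R_B$ has no linking face; both are handled by the observation that the intersection faces incident to $\partial R_B$ are pairwise disjoint and attached to the boundary, so a disk with such sub-disks removed is still effectively a disk.
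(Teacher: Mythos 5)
Your proof is correct. The two geometric halves are essentially the paper's own constructions: to go from a planar drawing to a full embedding, both you and the paper inflate a star around each block node so that adjacent regions meet in a single lens containing the shared elements; to go from a full embedding to a planar drawing, both place one node per block in its linking face and route one connection through each (unique, by the pseudo-disk property) intersection face. Where you genuinely differ is the combinatorial packaging. The paper stays with the bipartite map throughout, deriving it from the dual of the contour graph (deleting background-face nodes, attaching an extra ``green'' leaf for each nested block, and splitting each intersection-face node into one node per shared element). You instead observe up front that, by $2$-regularity, every element node of $G_b$ has degree exactly two, so suppressing these nodes shows that $G_b$ is planar if and only if the block intersection graph $G_s$ is planar, and then argue entirely in terms of $G_s$. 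This buys a cleaner argument: the bookkeeping for nested blocks and for blocks sharing several elements, which the paper handles by hand inside the dual-graph construction, is absorbed into the standard facts that edge subdivisions and parallel edges do not affect planarity; it also makes explicit a reduction the paper only exploits later (when relating full embeddability to grid intersection graphs). Nothing essential is lost; the only point worth spelling out in your Step~3 is that distinct faces of the contour graph are disjoint open regions, so edges of $G_s$ between disjoint pairs of blocks cannot meet --- but that is immediate. Your closing remark correctly isolates why the pseudo-disk hypothesis (rather than mere strongness) is what makes the forward direction work.
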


\begin{figure}[tb]
  \centering \subfigure[\label{sfg:clus}full
  embedding]{\includegraphics[page=3,width=.3\textwidth]{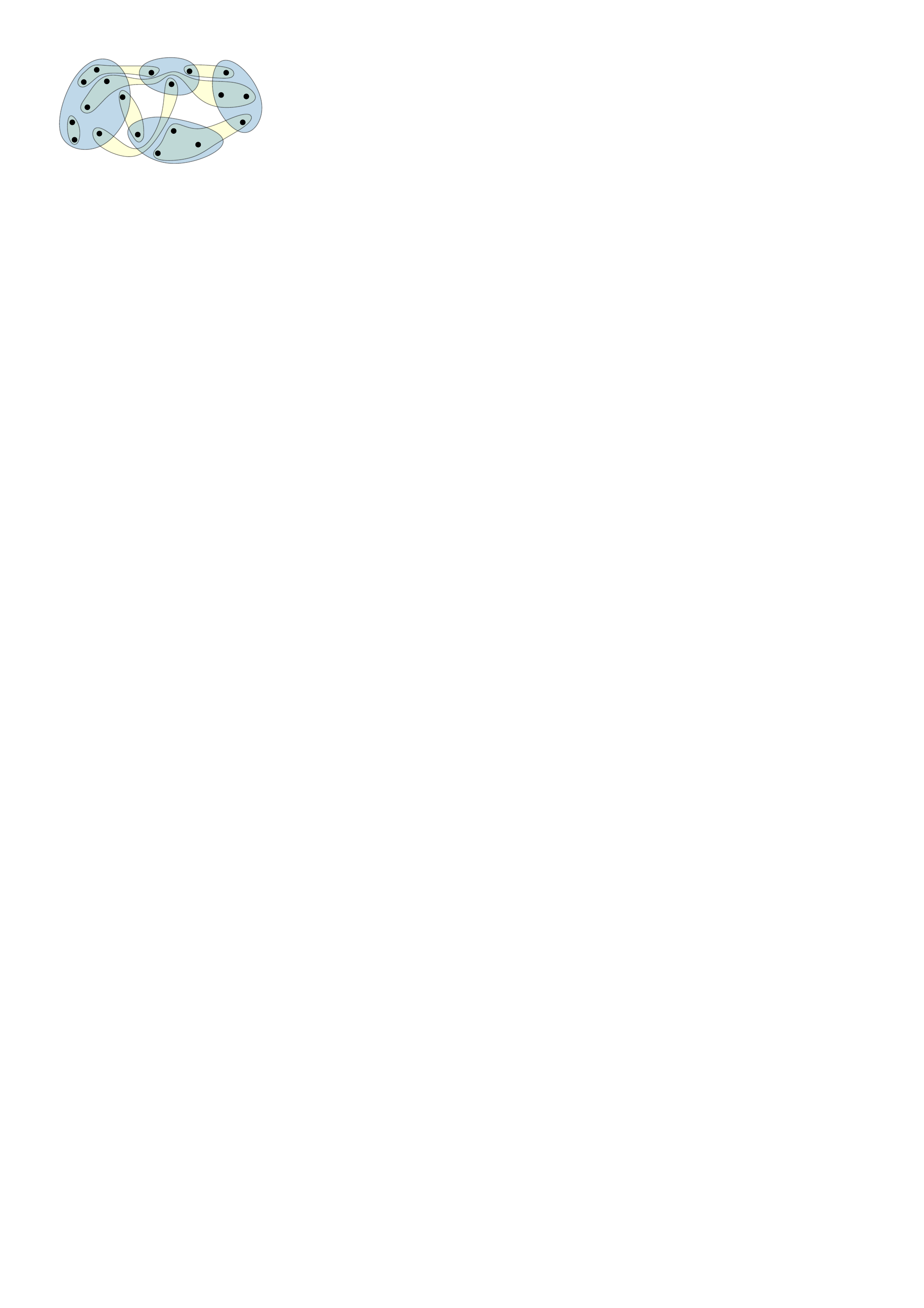}}
  \hfill \subfigure[\label{sfg:dual} dual of contour
  graph]{\includegraphics[page=4,width=.3\textwidth]{SkizzeFully.pdf}}
  \hfill \subfigure[\label{sfg:bipart}final bipartite
  map]{\includegraphics[page=5,width=.3\textwidth]{SkizzeFully.pdf}}
  \caption{Step-by-step illustration of the proof of
    Theorem~\ref{thm:charfull}.  Middle: Dual graph of contour graph,
  nodes in background faces already deleted, additional green nodes for
  nested regions already inserted.}
	\label{fig:embedding-examples-full}
\end{figure}

\begin{proof} Let~$\{\mathcal{P}_0, \mathcal{P}_1\}$ be a pair of
  partitions and~$G_\Gamma^*$ the dual graph of a corresponding full
  embedding~$\Gamma$.  The bipartite map of $\{\mathcal{P}_0,
  \mathcal{P}_1\}$ can be constructed as follows.  Remove all vertices
  (and incident edges) from~$G_\Gamma^*$ that stem from a background
  face in the contour graph (Fig.~\ref{sfg:clus} and
  Fig.~\ref{sfg:dual}).  This results in a planar graph with a set of
  \emph{red} nodes resulting from intersection faces, a set of \emph{green} nodes
  resulting from linking faces and edges indicating that two faces in
  the contour graph are adjacent.  Our definitions of simultaneous
  embedding and contour graph ensures that nodes of the same color are
  not adjacent. Hence, the graph constructed so far is
  bipartite. Moreover, each red node is adjacent to at most two green
  nodes and, since~$\Gamma$ is full (more precisely, since each
  intersection face in the contour graph is adjacent to at most two
  linking faces), two green nodes have at most one red common neighbor.
  From the same fact it further follows that each block region
  in~$\Gamma$ induces at most one linking face in the contour graph,
  and thus, the number of green nodes is at most the number of blocks
  in~$\{\mathcal P_0\cup \mathcal P_1\}$.

  If there are fewer green nodes than blocks in~$\{\mathcal P_0\cup
  \mathcal P_1\}$, at least one block region in~$\Gamma$ must be
  completely contained in another block region resulting in a red
  node for the intersection but no green node which could be
  considered as a representative of the nested block.  Hence, the
  red node representing the intersection is only adjacent to one green
  node, namely the green node resulting from the linking face of the
  block region that contains the nested block.  Note that no three
  block regions can be nested, since each point in~$\Gamma$ is
  contained in exactly two block regions.  In this case, we link an
  additional green node to the red node of the nested block region
  such that in the end each block in~$\{\mathcal{P}_0,
  \mathcal{P}_1\}$ corresponds to a green node and each red node is
  adjacent to exactly two green nodes.  Such an additional leaf
  obviously preserves planarity (Fig.~\ref{sfg:dual}).

  Since the bipartite map of a hypergraph (besides the nodes
  representing blocks) consists of nodes representing the elements
  in the universe, we finally replace each red node by a set of nodes
  representing the elements that have been mapped by~$\Gamma$ into the
  corresponding intersection face, and connect each
  of these new nodes along the previous edges to the two green nodes
  previously adjacent to the replaced red node.  This again preserves
  planarity of the finally resulting bipartite map of $\{\mathcal P_0\cup
  \mathcal P_1\}$ (Fig.~\ref{sfg:bipart}).

  Now assume a planar embedding of the bipartite map of~$\{\mathcal
  P_0,\mathcal P_1\}$ with green nodes representing the blocks and
  black nodes representing the elements of the universe.  In order to
  construct a full embedding of~$\{\mathcal P_0,\mathcal P_1\}$, we
  first construct for each block in $\{\mathcal P_0,\mathcal P_1\}$
  a subgraph of the bipartite map such that each subgraph contains the
  green node that represents the corresponding block and two
  subgraphs share exactly one black node if and only if the
  corresponding blocks share at least one element.  Since the map
  of~$\{\mathcal P_0,\mathcal P_1\}$ is bipartite, each of these
  subgraphs is a star with black leaves linked to a green center.
  Together these stars form a planar subgraph of the bipartite map
  such that slightly inflating the edges in each star yields simple
  block regions that intersect exactly at the positions of the black
  nodes.  In the resulting contour graph, two block regions thus
  intersect at most once and each intersection face is adjacent to
  exactly two linking faces, each representing a block.  A nested
  block in $\mathcal P_0\cup P_1$ results in a star that only
  consists of a green center and one black leaf.  In order to completely
  satisfy the condition of a full embedding, we shrink the
  block regions of nested blocks such that the boundary of the inner block 
  does not intersect the boundary of the outer block.  Deleting
  the green and black nodes and drawing a set of points that represent
  the common elements of the intersecting blocks in each
  intersection face finally yields a full embedding.
  
  We note that our construction uses only a single representative element per block intersection. Thus, in contrast to weak and strong embeddings, it is not clear whether a fully embeddable pair of partitions permits a full embedding on any set of points.%
\qed
\end{proof}

\subsection{Hierarchy of Embeddability Classes}
\label{sub:hierarchy_of_embeddability_classes}

A full embedding is strong by definition and we have seen above that a strong
embedding is also weak.  Hence, the three
embeddability classes introduced in this section induce a
hierarchy of embeddability classes. We now show that this hierarchy is strict. 

The left side of Fig.~\ref{sfg:strongnotfull} shows a strong embedding
of a pair of partitions~$\{\mathcal P_0, \mathcal P_1\}$ that is not
fully embeddable.  The dotted lines indicate a planar support proving
the strong embeddability of $\{\mathcal P_0, \mathcal P_1\}$.  The
fact that~$\{\mathcal P_0, \mathcal P_1\}$ is not fully embeddable can
be seen by considering the bipartite map of~$\{\mathcal P_0, \mathcal
P_1\}$, which is a subdivision of~$K_{3,3}$, and thus, is not planar
(see right side of~Fig.~\ref{sfg:strongnotfull}).  The claim then
follows from Theorem~\ref{thm:charfull}.

\begin{figure}[htb]
  \centering \subfigure[\label{sfg:strongnotfull}strongly but not
  fully
  embeddable]{\includegraphics[page=1,width=.45\textwidth]{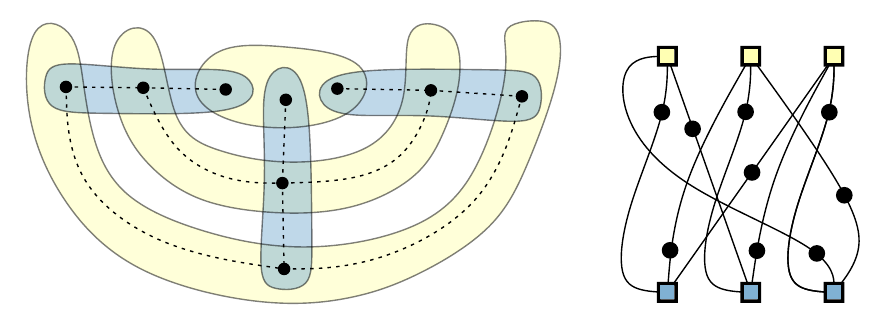}}
  \hspace{2ex} \subfigure[\label{sfg:weaknotstrong}weakly but not
  strongly
  embedable]{\includegraphics[page=3,width=.45\textwidth]{counterex.pdf}}
                \caption{Examples of simultaneous embeddings of two
                  partitions proving the strictness of the
                  hierarchy of embedding classes. }
	\label{fig:counterex}
\end{figure}

In order to prove that the class of strong embeddability is a proper
subclass of weak embeddability, we take a detour via string graphs. 

A graph~$G= (V,E)$ is a \emph{string graph} if there exists a set~$\mathcal R =
\{R(v)\mid v\in V \}$ of curves in the plane such that~$R(u) \cap R(v)\not=
\emptyset$ if and only if~$\{u,v\}\in E$.
Deciding whether a graph is a string graph is \NP-hard~\cite{sss-rnp-03}.
However, Schaefer and {\u S}tefankovi{\u c}~\cite{ss-d-04} showed that a graph is
\emph{no} string graph if it is constructed from a non-planar graph by
subdividing each edge at least once.
Together with the following lemma we can thus prove that the pair of partitions
shown in Fig.~\ref{sfg:weaknotstrong} is not strongly embeddable.  %

\begin{lemma}\label{lem:strongString}
  The block intersection graph of a strongly embeddable pair of partitions is a
  string graph.
\end{lemma}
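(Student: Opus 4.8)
The plan is to shrink each block region of a strong embedding $\Gamma$ of $\{\mathcal P_0,\mathcal P_1\}$ down to a curve, choosing the curves so that two of them meet exactly inside the intersection faces that witness a shared element. The first and most important step is to record the following property of strong embeddings: for any two blocks $B,B'$, the regions $R_B$ and $R_{B'}$ intersect if and only if $B\cap B'\neq\emptyset$. The forward implication is trivial, since a common element $u\in B\cap B'$ is mapped to a point $\Gamma(u)\in R_B\cap R_{B'}$. For the converse, I would argue that any nonempty intersection of two regions contains an intersection face of the contour graph $G_\Gamma$: a crossing of the two boundaries (which, by the definition of an embedding, is the only way two boundaries may meet) produces a two-dimensional cell lying inside both regions, and a nested pair of regions produces the inner region itself as such a cell. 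Since $\Gamma$ is strong, that intersection face contains some point $\Gamma(u)$, and then $u\in B\cap B'$. Note that this step genuinely uses strongness and would fail for merely weak embeddings.

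Next I would fix representatives inside the embedding. For every intersection face $F$ of $G_\Gamma$ — which, since blocks of one partition are pairwise disjoint, belongs to exactly one block of $\mathcal P_0$ and one of $\mathcal P_1$ — I pick a \emph{center} point $c_F$ in the interior of $F$. For a block $B$, let $F_1,\dots,F_k$ be the intersection faces contained in $R_B$; this list is nonempty because $B$ is nonempty and the point of any element of $B$ lies in some intersection face of $R_B$. All the centers $c_{F_1},\dots,c_{F_k}$ lie in $R_B$, which is homeomorphic to a closed disk and hence path-connected, so I can define $R(B)$ to be a single connected curve contained in $R_B$ that passes through all of $c_{F_1},\dots,c_{F_k}$ (for instance a union of arcs in $R_B$ joining consecutive centers, traced as one curve). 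Equivalently, one could let $R(B)$ trace the edges of the connected induced subgraph $G_p[B]$ of a planar support $G_p$ obtained from Theorem~\ref{thm:se_vertex_planar}.

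It then remains to check the string-graph condition for $\mathcal R=\{R(B)\mid B\in\mathcal P_0\cup\mathcal P_1\}$. If $B\cap B'\neq\emptyset$, pick a common element $u$; by the key property $\Gamma(u)$ lies in an intersection face $F$ belonging to both $R_B$ and $R_{B'}$, so $F$ appears in both lists of intersection faces and hence both $R(B)$ and $R(B')$ pass through $c_F$ and meet. Conversely, if $B\cap B'=\emptyset$, then $R_B\cap R_{B'}=\emptyset$ by the key property, and since $R(B)\subseteq R_B$ and $R(B')\subseteq R_{B'}$ the curves are disjoint. Hence $R(B)$ and $R(B')$ intersect exactly when $B\cap B'\neq\emptyset$, i.e.\ exactly for the edges of $G_s(\mathcal P_0,\mathcal P_1)$, which proves that the block intersection graph is a string graph.

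The only place needing genuine care is the key property of strong embeddings — in particular the claim that two regions cannot meet without forming a point-containing intersection face, which is exactly where the ``crossing point'' requirement in the definition of an embedding and the strongness of $\Gamma$ enter. Everything after that is routine planar topology.
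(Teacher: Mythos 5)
Your proof is correct, but it builds the strings differently from the paper. The paper takes the \emph{boundaries} $\partial R_B$ of the block regions as the curves (after deleting the element points and puncturing each boundary at a non-crossing point), so that two strings meet exactly at the boundary crossings; since a boundary crossing creates an intersection face and strongness forces that face to contain a shared element, this gives the required equivalence --- except for nested regions, whose boundaries do not touch, and which the paper must patch by redrawing the inner block's curve so that it crosses the outer one. You instead route each string through the \emph{interior} of $R_B$, connecting a chosen center point of every intersection face contained in $R_B$; containment $R(B)\subseteq R_B$ plus your ``key property'' (region intersections coincide with block intersections in a strong embedding) then gives both directions at once, and the nested case needs no special treatment because a nested region still contributes an intersection face with a center. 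Both arguments hinge on exactly the same use of strongness --- that any nonempty overlap of two regions yields an intersection face of the contour graph, which must contain an element point --- which you correctly identify and justify via the crossing-point condition. Your construction is essentially the interior/support-based picture (as you note, one could equivalently trace the block-induced subtrees of the planar support from Theorem~\ref{thm:se_vertex_planar}), and it buys a slightly cleaner, case-free verification at the cost of having to argue that the centers can be joined by arcs inside the (simply connected) region; the paper's boundary-based construction is more immediate to draw but pays for it with the nested-region repair step.
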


\begin{proof}
  Let~$\Gamma$ be a strong embedding of a pair of
  partitions~$\{\mathcal P_0, \mathcal P_1\}$. Our goal is to
  construct a set~$\mathcal R = \{R(B) \mid B \in \mathcal P_0 \cup \mathcal
  P_1\}$ of curves in the plane, which correspond to the blocks
  in~$\mathcal P_0 \cup \mathcal P_1$ such that~$R(B) \cap R(B') \not=
  \emptyset$ if and only if~$B$ and~$B'$ share a common element.  This
  is equivalent to the assertion that the block intersection graph
  $G_s(\mathcal P_0,\mathcal P_1)$ is a string graph.  We
  construct~$\mathcal R$ along~$\Gamma$ as follows. First we delete the points
  in~$\Gamma(U)$.  Then we delete one point of the boundary of each
  block region that is not an intersection point.  This results in a
  set~$R$ of curves that correspond to the blocks in $\mathcal P_0
  \cup \mathcal P_1$, and since~$\Gamma$ is strong, two curves have a
  common point if and only if the corresponding blocks have at least
  one common element and the previous block regions were not nested
  in~$\Gamma$.  For blocks whose block regions are nested
  in~$\Gamma$, we replace the curve that represents the nested block
  by a curve that crosses the surrounding block curve. This finally
  yields the desired set of curves. \qed
\end{proof}

Now consider the pair of partitions in Fig.~\ref{sfg:weaknotstrong}.
The left side of Fig.~\ref{sfg:weaknotstrong} shows a weak embedding
while the right side shows the corresponding block intersection
graph, which is constructed from~$K_5$ by subdividing each edge
exactly once.  Consequently, since~$K_5$ is not planar, it is no string
graph (according to Schaefer and {\u S}tefankovi{\u c}~\cite{ss-d-04}).  Applying
Lemma~\ref{lem:strongString} finally proves that the pair of
partitions depicted in Fig.~\ref{sfg:weaknotstrong} is not strongly
embeddable.

\subsection{Embeddability and Hypergraph Planarity}
\label{sub:embeddability_and_hypergraph_planarity}

 The weak
embeddability class forms the basis of the hierarchy and contains
all pairs of partitions. The strong embeddability class and
the full embeddability class are characterized by the existence of a
planar support and the planarity of the bipartite map of a pair of
partitions, respectively, where the latter directly implies a linear
time algorithm for the corresponding decision problem.  Moreover,
these characterizations reveal close relations to the hypergraph planarity
concepts of \emph{Zykov} and \emph{vertex planarity}.

A hypergraph $H = (V,\mathcal{S})$ is Zykov-planar~\cite{z-h-74}, if
there exists a subdivision of the plane into faces, such that each
hyperedge $S \in \mathcal{S}$ can be mapped to a face of the
subdivision, and each vertex $v \in V$ can be mapped to a point on the
boundary of all faces that represent a hyperedge containing~$v$.
Walsh~\cite{w-hvbm-75} showed that a hypergraph is Zykov planar if and
only if its bipartite map is planar.  %

In contrast, a hypergraph $H = (V,\mathcal{S})$ is
vertex-planar~\cite{jp-hpcdvd-87} if there exists a subdivision of the
plane into faces, such that every vertex $v \in V$ can be mapped to a
face and for every hyperedge $S \in \mathcal{S}$, the interior of the
union of all faces of the vertices in $S$ is connected.  Kaufmann et
al.~\cite{kks-sdh-09} showed that a hypergraph is vertex planar if and
only if it has a planar support.  
This shows that the class of \emph{fully}
embeddable pairs of partitions is a subclass of Zykov planar
hypergraphs, and the class of \emph{strongly}
embeddable pairs of partitions is a subclass of vertex planar
hypergraphs.

\section{Complexity of Deciding Strong Embeddability}\label{s:cdse}

In this section we show the \NP-completeness of testing strong embeddability. As a consequence, testing whether the corresponding hypergraph of a pair of partitions has a planar support is also \NP-complete by Theorem~\ref{thm:se_vertex_planar}.
This seems not very surprising considering the more general hardness
results of Johnson and Pollak~\cite{jp-hpcdvd-87} and Buchin et
al.~\cite{bkms-psh-10} who showed that deciding the existence of a
planar support and a 2-outerplanar support in general hypergraphs is
\NP-hard.  However, we consider a
restricted subclass of 2-regular hypergraphs, thus, the \NP-hardness
of our problem does not directly follow from the previous results.
Moreover, other special cases, e.g., finding path, cycle, tree, and
cactus supports are known to be solvable in polynomial
time~\cite{jp-hpcdvd-87,bkms-psh-10,bcps-bhaho-11}. Together with the characterization of
Theorem~\ref{thm:se_vertex_planar}, Theorem~\ref{thm:StrongHard}
immediately implies that testing the vertex planarity of a 2-regular
hypergraph is \NP-complete.

\begin{theorem}\label{thm:StrongHard}
  Deciding the strong embeddability of a pair of partitions is \NP-complete.
\end{theorem}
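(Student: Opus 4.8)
The plan is to establish membership in \NP\ and then give a polynomial-time reduction from a known \NP-hard problem. Membership is the easy part: by Theorem~\ref{thm:se_vertex_planar} a pair of partitions is strongly embeddable iff it has a planar support, and a planar support (e.g.\ its tree-based version, in which every block induces a tree, hence at most $|U|-1$ edges) is a polynomially-sized certificate whose support property and planarity can both be verified in polynomial time. So the work is entirely in the hardness proof.

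\textbf{The reduction.} For hardness I would reduce from \textsc{Planar 3-SAT} or, more naturally for this setting, from a planarity-flavoured \NP-hard problem about supports or about graph-like structures that translate cleanly into 2-regular hypergraphs. The key constraint is that the instance I build must be \emph{2-regular}: every element of the universe lies in exactly one block of~$\mathcal P_0$ and one block of~$\mathcal P_1$. So a block of $\mathcal P_0$ and a block of $\mathcal P_1$ "meet" in the set of elements common to both, and the combinatorial object being controlled is essentially the bipartite block intersection graph $G_s(\mathcal P_0,\mathcal P_1)$ together with the multiplicities (how many shared elements each block pair has). By Theorem~\ref{thm:se_vertex_planar} the question is whether this structure admits a planar support, i.e.\ (after passing to a tree-based support) whether one can choose, for each block, a spanning tree on its elements so that the union of all these trees is planar. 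I would design variable gadgets and clause gadgets as small pieces of such a structure: a variable gadget is a cyclic arrangement of elements/blocks that forces a binary choice in any planar support (which way a certain induced tree "wraps"), and a clause gadget is a block whose induced subgraph can be completed to a tree without creating a non-planar configuration exactly when at least one incident variable gadget is set to the satisfying side. The global layout of the gadgets follows the planar embedding of the \textsc{Planar 3-SAT} instance, so that the only obstruction to a planar support is an unsatisfiable assignment, which forces a $K_5$- or $K_{3,3}$-subdivision.

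\textbf{Correctness.} The argument then splits into the two standard directions. Given a satisfying assignment, I set each variable gadget according to the assignment and exhibit an explicit planar support (equivalently, a planar strong embedding via Theorem~\ref{thm:se_vertex_planar}), checking that the chosen block-trees glue together without crossings. Conversely, given any planar support, I argue that each variable gadget must be in one of its two intended states (any other choice already forces a forbidden minor locally), read off the induced assignment, and show that every clause gadget's planarity certifies that its clause is satisfied. Throughout I must be careful that the gadgets are genuinely 2-regular and that shared elements between gadgets are introduced only where the reduction intends block pairs to intersect.

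\textbf{Main obstacle.} The hard part will be designing a variable gadget that simultaneously (a) is 2-regular, (b) rigidly enforces a two-state choice in \emph{every} planar support — not just in some canonical one, since supports have a lot of freedom in which edges to pick and how to embed them — and (c) transmits that choice to all incident clause gadgets without the gadgets interfering with one another's planarity. The 2-regularity restriction is what makes this genuinely different from the earlier \NP-hardness results of Johnson--Pollak and Buchin et~al.\ for general hypergraphs, so the gadget construction cannot simply borrow theirs; getting the rigidity right (ruling out "cheating" supports) while staying inside the 2-regular world is the crux, and verifying it will likely require a careful case analysis of the possible cyclic orders of blocks around the gadget in a hypothetical planar support.
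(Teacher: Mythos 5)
Your \NP-membership argument is exactly the paper's (guess a planar support, verify planarity and the support property via Theorem~\ref{thm:se_vertex_planar}), and your choice of starting problem (a planar variant of 3-SAT) is also the right one --- the paper reduces from \textsc{monotone planar 3Sat} given by a monotone rectilinear representation. But the hardness half of your write-up is a plan, not a proof: you say you ``would design'' variable and clause gadgets that are 2-regular, rigid, and non-interfering, and you correctly identify that forcing a binary choice in \emph{every} planar support (not just a canonical one) is the crux --- and then you stop there. Since the entire technical content of the reduction is precisely the construction you leave unspecified, and since you yourself flag it as the main obstacle without resolving it, this is a genuine gap rather than a compressed proof.

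It is worth seeing how the paper closes exactly that gap, because it does so by a different mechanism than the local ``cyclic arrangement that forces a tree to wrap one of two ways'' you envision. Rigidity is obtained \emph{globally}: the paper first builds a scaffold pair of partitions $\{\mathcal Q_0,\mathcal Q_1\}$ from a grid graph $G_{m,n}$ (vertex blocks joined by chains of four edge blocks, each adjacent pair sharing one element), and proves (Lemma~\ref{lem:unique_embed}) that this pair has a \emph{unique} proper strong embedding because $G_{m,n}$ is a subdivision of a planar 3-connected graph. Clause blocks are then single large blocks laid on top of this rigid grid as two mirrored E-shapes (following the rectilinear representation), pinned to the grid by shared elements with edge blocks. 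The two-state variable choice is not carried by a separate variable gadget at all: because a clause block's region must be a \emph{simple} closed region, its intermediate ``E-shape linked through three variable cells'' template must be opened in exactly two cells, and whether it opens inside a variable cell encodes the truth value; satisfiability corresponds to being able to place every shared element of two crossing clause blocks into an intersection face. If you want to complete your proof along your own lines, you would still need to invent a rigidity mechanism of comparable strength; the paper's grid-plus-3-connectivity trick is the standard way to get it in this 2-regular setting, and reproducing it (or something equivalent) is unavoidable.
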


The existing hardness results~\cite{jp-hpcdvd-87,bkms-psh-10} rely on
elements that are contained in more than two hyperedges and could not be
adapted to our 2-regular setting.  Instead we prove the hardness of deciding
strong embeddability by a quite different reduction from the
\NP-complete problem \textsc{monotone planar 3Sat}~\cite{bk-obspp-10}.
A monotone planar 3Sat formula~$\varphi$ is a 3Sat formula whose
clauses either contain only positive or only negated literals (we call
these clauses \emph{positive} and \emph{negative}) and whose
variable-clause graph $H_\varphi$ is planar.  A \emph{monotone
  rectilinear representation} (MRR) of $\varphi$ is a drawing of $H_\varphi$
such that the variables correspond to axis-aligned rectangles on the
x-axis and clauses correspond to non-crossing E-shaped ``combs'' above
the x-axis if they contain only positive variables and below the
x-axis otherwise; see
Fig.~\ref{fig:planar_monotone_3_sat}.

An instance of \textsc{monotone planar 3Sat} is an MRR of a monotone planar 3Sat formula~$\varphi$.  In the
proof of Theorem~\ref{thm:StrongHard} we will construct a pair of
partitions~$\{\mathcal P_0,\mathcal P_1\}_\varphi$ that admits a
strong embedding if and only if~$\varphi$ is satisfiable.

For the sake of simplicity, we restrict
the class of
strong embeddings to the subclass of \emph{proper strong embeddings}, which
is equivalent, as we can argue that a pair of
partitions has a strong embedding if and only if it also has a proper one.
A strong embedding is \emph{proper} if the contour
graph does not contain background or linking faces that are
adjacent to only two other faces. 
Figure~\ref{fig:proper} illustrates how background or linking faces violating this condition can be removed, transforming a strong embedding into a proper one. %
\begin{figure}[tbp]
  \centering \subfigure[Removing a linking
  face.\label{fig:proper-linking}]{\makebox[.4\textwidth]{\includegraphics[page=1]{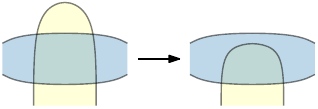}}}
  \hspace{1cm} \subfigure[Removing a background
  face.\label{fig:proper-bg}]{\makebox[.4\textwidth]{\includegraphics[page=2]{proper}}}
  \caption{Two cases for transforming a strong embedding into a proper
    strong embedding.}
	\label{fig:proper}
\end{figure}
We say that two proper strong embeddings are \emph{equivalent} if the
embeddings of their contour graphs are equivalent, i.e. if the cyclic order of the edges around each vertex is the same. A pair of
partitions has a \emph{unique strong embedding} if all proper strong
embeddings are equivalent. 
Note that, analogously to the definition of equivalence of planar graph embeddings, two equivalent proper strong embeddings may have different unbounded outer background faces.
Our construction in the hardness proof is independent of the choice of the outer face.

Next we define a special pair of partitions that has a unique
grid-shaped embedding as a scaffold for the gadgets in the subsequent
proof of Theorem~\ref{thm:StrongHard}. The first step is to construct
a base graph $G_{m,n}$ for two integers~$m$ and~$n$. The graph
$G_{m,n}$ is a grid with $mn+1$ columns and $2m+2$ rows of
vertices with integer coordinates $(i,j)$ for $0 \le i \le mn$ and $0
\le j \le 2m+1$. Each vertex $v$ with coordinates $(i,j)$ is connected
to the four vertices at coordinates $(i-1,j), (i+1,j), (i, j-1),
(i,j+1)$ (if they exist). Between the middle rows $m$ and $m+1$ we
remove all 
vertical edges except for those in
columns $0, m, 2m, \dots, nm$. This defines
$n$ larger grid cells of width~$m$ in this particular row. Figure~\ref{fig:grid}
(left) shows an example. %

\begin{figure}[tbp]
	\centering
		\includegraphics[scale=1]{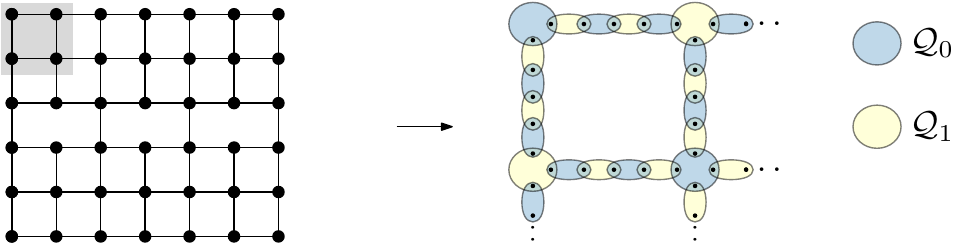}
                \caption{Graph $G_{2,3}$ and the partitions $\{\mathcal Q_0,
                  \mathcal Q_1\}$ sketched for the top-left grid cell marked in gray.}
	\label{fig:grid}
\end{figure}

From $G_{m,n}$ we construct a pair of partitions $\{\mathcal Q_0,
\mathcal Q_1\}$ as follows (see Fig.~\ref{fig:grid}). For each
vertex~$v$ with coordinates $(i,j)$ we create a \emph{vertex
block} $B_v$ in partition $\mathcal
Q_{(i+j) \pmod{2}}$. For each edge $(u,v)$ in $G_{m,n}$ we create a
chain of four \emph{edge blocks} $B_{u,v}^1$, $B_{u,v}^2$, $B_{u,v}^3$,
$B_{u,v}^4$, such that $B_{u,v}^1$ and $B_{u,v}^3$ are in the same 
partition as $B_v$ and $B_{u,v}^2$ and $B_{u,v}^4$ are in the same
partition as $B_u$.
We distribute five distinct elements among the edge blocks of $(u,v)$
and the vertex blocks for $u$ and $v$ such that they form the desired chain pattern and each intersection face contains one common element. %
The
pair $\{\mathcal Q_0, \mathcal Q_1\}$ is indeed a pair of partitions
as every element belongs to
exactly one block of each partition. Edge blocks contain two and
vertex blocks up to four elements (depending on the degree of the
corresponding vertex in $G_{m,n}$). Below we will add the gadgets of the reduction on top of $\{\mathcal Q_0, \mathcal Q_1\}$, for which it is required that there is an edge block in each partition that does not share any element with a vertex block. This explains why we link  blocks of adjacent vertices by chains of four blocks. %

The next lemma shows that $\{\mathcal Q_0, \mathcal Q_1\}$ has a unique embedding, which is a consequence of the fact that %
$G_{m,n}$ is a
  subdivision of a planar 3-connected graph (assuming $n\ge 2$) and thus it has a unique embedding. %
This property is inherited by $\{\mathcal Q_0, \mathcal Q_1\}$ in our
  construction. %

\begin{lemma}\label{lem:unique_embed}
  The pair of partitions $\{\mathcal Q_0, \mathcal Q_1\}$ has a
  unique embedding.
\end{lemma}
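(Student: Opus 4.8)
The plan is to reduce the statement to the essentially unique planar embedding of a $3$-connected planar graph. By the definition given above, $\{\mathcal Q_0,\mathcal Q_1\}$ has a unique embedding if all of its proper strong embeddings are equivalent, i.e., induce the same rotation system on their contour graphs. I would establish a correspondence, bijective up to equivalence, between the proper strong embeddings of $\{\mathcal Q_0,\mathcal Q_1\}$ and the planar embeddings of $G_{m,n}$, and then use that (for $n\ge 2$) $G_{m,n}$ is a subdivision of a $3$-connected planar graph and hence has a unique planar embedding.

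First I would record two structural facts about $\{\mathcal Q_0,\mathcal Q_1\}$. Since two distinct vertex blocks never share an element and the five elements of an edge $(u,v)$ are distributed so as to realise the chain pattern, the block intersection graph $G_s(\mathcal Q_0,\mathcal Q_1)$ is obtained from $G_{m,n}$ by subdividing every edge four times (once per edge block $B_{u,v}^1,\dots,B_{u,v}^4$); in particular $G_s$ is, like $G_{m,n}$, a subdivision of the same $3$-connected planar graph, and its minimum degree is two. Moreover, any two intersecting blocks of $\{\mathcal Q_0,\mathcal Q_1\}$ share exactly one element. Given a proper strong embedding $\Gamma$, these facts let me read off a planar embedding of $G_{m,n}$: since $\Gamma$ is strong and the shared element is unique, the boundaries of any two intersecting block regions bound a single intersection face, so each edge of $G_s$ corresponds to exactly one intersection face, and the rotation at a vertex $v$ of $G_{m,n}$ is the cyclic order in which the intersection faces of the chains incident to $B_v$ appear around $\partial R_{B_v}$. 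Equivalent proper strong embeddings clearly induce the same planar embedding of $G_{m,n}$.

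For the converse I would start from a planar embedding $\mathcal E$ of $G_{m,n}$, build a support of $\{\mathcal Q_0,\mathcal Q_1\}$ consistent with $\mathcal E$ (single edges for the edge blocks, a path through the elements of each vertex block in the cyclic order prescribed by $\mathcal E$), and inflate it as in the proof of Theorem~\ref{thm:se_vertex_planar}; after removing superfluous faces this gives a proper strong embedding inducing $\mathcal E$. The point to verify is \emph{rigidity}: up to equivalence this is the only proper strong embedding inducing $\mathcal E$. No block region can be nested inside another, because every block shares an element with at least two others (the minimum degree of $G_s$ is two), so a nested block would place some element point in three block regions, contradicting $2$-regularity. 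Each intersection face carries exactly one element and hence cannot be split. Finally, properness forbids the sliver background and linking faces produced by any two block boundaries crossing more often than necessary and forces the two intersection faces of each edge block onto opposite sides of its region. Together with the fixed cyclic order of the incident chains around every vertex block, this pins down the contour graph and its embedding up to equivalence, so the correspondence is bijective up to equivalence.

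Combining the two directions with the unique planar embedding of a $3$-connected planar graph — which is unaffected by edge subdivisions and which the notion of equivalence already disregards up to reflection and the choice of the outer face — shows that all proper strong embeddings of $\{\mathcal Q_0,\mathcal Q_1\}$ are equivalent, proving the lemma. I expect the main obstacle to be the rigidity argument in the converse direction: one must rule out, using only properness and the one-element-per-intersection-face property, every alternative contour graph realising the same rotation system of $G_{m,n}$, in particular configurations stemming from spurious repeated crossings of two block boundaries.
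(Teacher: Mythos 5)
Your proposal is correct and follows essentially the same route as the paper: reduce to the unique planar embedding of $G_{m,n}$ as a subdivision of a $3$-connected planar graph, observe that strongness plus the one-shared-element property forces a single intersection face per intersecting pair, use properness to exclude superfluous crossings so that each block's contour subgraph is locally rigid, and identify the only remaining freedom with the cyclic order of incident edges at each grid vertex. The paper phrases this as a direct local analysis of the contour subgraphs rather than an explicit two-way correspondence, but the substance is the same.
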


\begin{proof}
  First, we observe that the base graph $G_{m,n}$ is a
  subdivision of a planar 3-connected graph (assuming $n \ge 2$) and thus it has a unique embedding (up to the
  choice of the outer face) in the plane.  We claim that this property
  is inherited by $\{\mathcal Q_0, \mathcal Q_1\}$ in our
  construction.
	
  Each edge block contains exactly two elements and intersects
  exactly two blocks of the other partition.  Thus its contour
  subgraph in any proper strong embedding is isomorphic to the
  4-cycle $C_4$ with two non-incident duplicate edges inside, which
  belong to the boundaries of the two intersecting blocks.  Each
  vertex block $B_v$ contains two, three, or four elements,
  depending on the degree $k$ of $v$ in $G_{m,n}$.  Since $B_v$
  intersects with $k$ edge blocks of the other partition there are
  exactly two intersection points with the boundary of each of these
  edge blocks in a proper strong embedding (if there were four
  intersection points, then the edge block would not be proper).
  Thus the contour subgraph of $B_v$ is a 4-, 6-, or 8-cycle with
  two, three, or four non-incident duplicate edges inside belonging to
  the intersecting edge blocks.  There is a bijection between the
  possible cyclic intersection orders of the $k$ edge blocks and the
  possible cyclic orders of the $k$ incident edges of vertex $v$ in
  $G_{m,n}$.  Thus we have locally the same embedding choices of
  the contour graph of $C_v$ as for the vertex $v$ in $G_{m,n}$.
  Since $G_{m,n}$ has a unique embedding, and since each edge of
  $G_{m,n}$ is represented in $\{\mathcal Q_0, \mathcal Q_1\}$
  by a sequence of four edge blocks with a locally unique embedding
  between the two incident vertex blocks, we conclude that for every
  proper strong embedding of $\{\mathcal Q_0, \mathcal Q_1\}$ the
  induced contour graph is the same graph with the same unique planar
  embedding. Otherwise we could derive two different embeddings of
  $G_{m,n}$, which is a contradiction. \qed
\end{proof}

Now we have all the tools that we need to prove our main theorem in this section.

\begin{proof}[of Theorem~\ref{thm:StrongHard}]
  First we show that the problem is in \NP. By
  Theorem~\ref{thm:se_vertex_planar} we know that a pair of
  partitions is strongly embeddable if and only if it has a planar
  support.  Thus we can ``guess'' a graph on $U$ and then test
  its planarity and support property in polynomial time. This shows membership in \NP.
  It remains to describe the hardness reduction.

  Let $\varphi$ be a planar monotone 3Sat formula together with an
  MRR.  %
First we construct the pair of
  partitions $\{\mathcal Q_0, \mathcal Q_1\}$ for the base graph
  $G_{m,n}$, %
  where
  $m$ is the number of clauses of $\varphi$ and $n$ is the number of variables
of $\varphi$.
By Lemma~\ref{lem:unique_embed}
  $\{\mathcal Q_0, \mathcal Q_1\}$ has a unique proper grid-like
  embedding.  We call $\{\mathcal Q_0, \mathcal Q_1\}$ the \emph{base
    grid} and the $n$ special cells between rows $m$ and $m+1$ the
  \emph{variable cells} of the base grid.
	
		\begin{figure}[tb]%
			\begin{center}
                          \subfigure[Monotone rectilinear
                          representation of a formula $\varphi$]{
                            \def\svgwidth{0.75\textwidth}
                            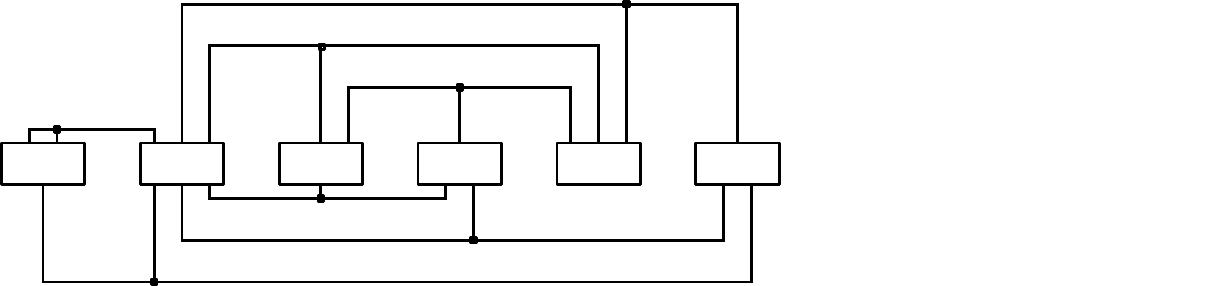
					\label{fig:planar_monotone_3_sat}%
                                      } \subfigure[Sketch of the
                                      clause blocks laid on top of
                                      the grid $\{\mathcal Q_0,
                                      \mathcal Q_1\}$ (empty columns omitted)
									  ]{\makebox[.98\textwidth][c]{
                                        \def\svgwidth{0.8\textwidth}
                                        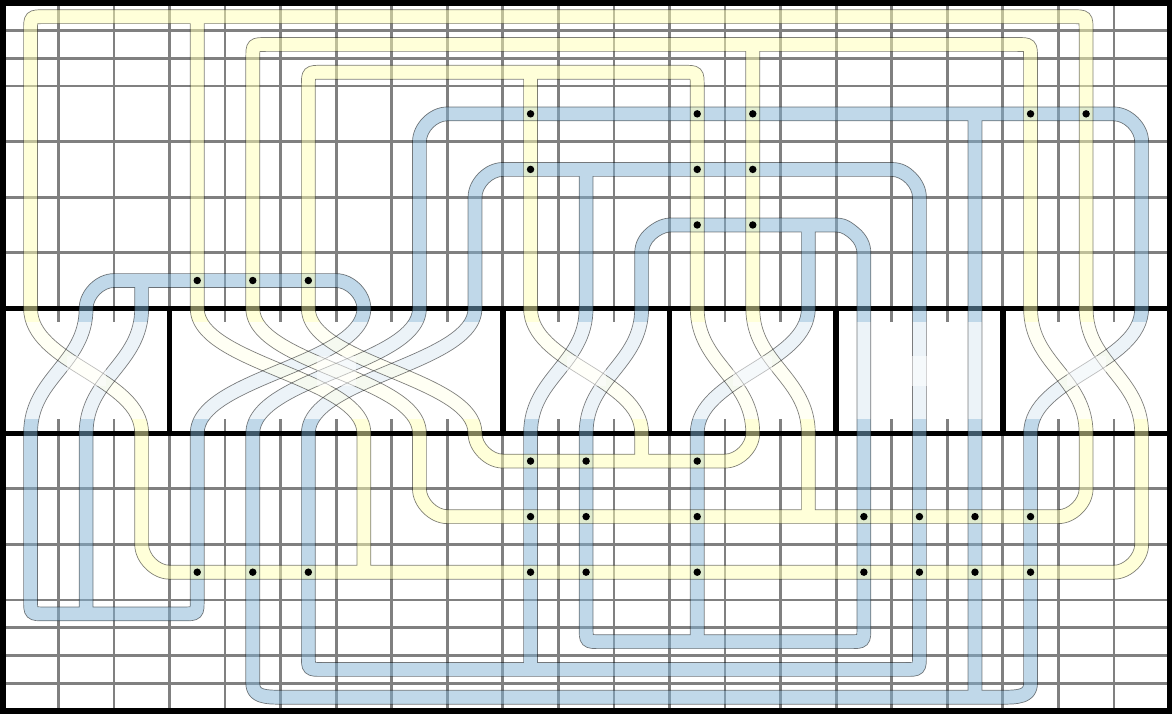}
					\label{fig:np_complete_2_clusterings}%

				}
				\caption{Illustration of the \NP-hardness reduction}
			\end{center}
		\end{figure}
		
                Next we augment the pair of partitions $\{\mathcal
                Q_0, \mathcal Q_1\}$ by additional blocks, one for
                each clause, where positive clauses are added to
                $\mathcal Q_0$ and negative clauses to $\mathcal Q_1$.
                The definition of these \emph{clause blocks}
                closely follows the layout of the
                given MRR, see
                Fig.~\ref{fig:planar_monotone_3_sat}. Let $C_1, C_2,
                \dots, C_l$ be the positive clauses of $\varphi$
                ordered so that if $C_i$ is nested inside the E-shape
                of $C_j$ in the given MRR then $i<j$. Analogously let
                $C_{l+1}, \dots, C_m$ be the ordered negative
                clauses. We describe the definition of the block $B_i$
                for a positive clause $C_i$ ($1 \le i \le l$); blocks
                for negative clauses are defined symmetrically. We
                create an \emph{intermediate embedding} of $B_i$
                (which is not yet strong but serves as a template for
                a later strong embedding) by putting $B_i$ on top of
                the base grid\footnote{The idea of fixing paths to an
                  underlying grid is inspired by Chaplick et
                  al.~\cite{cjkv-bbpigsnwg-12}.} and adding new
                elements to $B_i$ and to certain edge blocks in
                $\mathcal Q_1$. This fixes $B_i$ to run through two
                mirrored E-shaped sets of grid cells of our choice
                (Fig.~\ref{fig:np_complete_2_clusterings}). %
                In the upper half of the base grid, $B_i$ is assigned
                to run between rows $m-i$ and $m-i+1$. Furthermore,
                $B_i$ is assigned to three columns leading towards the
                variable cells from the top. Let $x_j$ be a variable
                contained in $C_i$ and assume that $C_i$ is the $k$-th
                positive clause from the right connecting to $x_j$ in
                the embedding of the given MRR. Then $B_i$ runs
                between columns $jm - k$ and $jm - k + 1$. In the
                lower half of the base grid we translate and mirror
                the resulting E-shape as follows. We let $B_i$ occupy
                the cells between rows $2m+2-l+i-1$ and $2m+2-l+i$ and
                the three columns are shifted to the left by the
                number of occurrences of the respective variable in
                negative clauses
                (Fig.~\ref{fig:np_complete_2_clusterings}). Since each
                variable cell is $m$ columns wide, we can always
                assign each clause to a unique column of $x_j$ in the
                top and bottom half of the grid in this way. %

				We actually
                fix $B_i$ to the base grid by adding one shared element for each crossed edge of a grid cell to both
                $B_i$ and the respective edge block of $\mathcal Q_1$ that does not share an element with a vertex
                block in $\mathcal Q_0$ (recall that $\{\mathcal Q_0, \mathcal Q_1\}$ contains such a
                block in each partition and for each grid edge). No two blocks of the same clause type (positive or negative)
                intersect, but blocks of different type do intersect in certain grid cells. For each grid cell shared
                between a positive and negative block (except for the $n$ variable cells) we add one shared element
                (black dots in Fig.~\ref{fig:np_complete_2_clusterings}) and call the respective grid cell the
                \emph{home cell} for this element. Recall that the orders of the incoming blocks from the top
                and the bottom of each variable cell are inverted. Thus, within each variable cell the blocks of each
                pair of a positive and negative clause using the corresponding variable intersect, but no shared
                element is added. We denote the resulting new pair of partitions as~$\{\mathcal P_0,\mathcal
                P_1\}_\varphi$ and observe that its size is polynomial in the size of $\varphi$.
	
                Next we argue about the strong embedding options in contrast to the immediate embedding for a clause
                block $B_i$ in $\{\mathcal P_0, \mathcal P_1\}_\varphi$. In the intermediate embedding each block has three
                connections through variable cells linking the upper E-shape with the lower E-shape. Any element shared
                with an edge block of the uniquely embedded base grid must obviously be reached by the block region of
                $B_i$. Since the block region must be simple, any strong embedding of $B_i$ results from opening the
                intermediate embedding of $B_i$ in exactly two grid cells so that the resulting block region of $B_i$
                is connected and has no holes. Additionally, a shared element must be placed in any intersection of the
                block region of $B_i$ with block regions of other clause blocks.
	
                First we assume that $\varphi$ is a satisfiable formula and a satisfying variable assignment is given.
                We need to show that $\{\mathcal P_0,\mathcal P_1\}_\varphi$ has a strong embedding. If a variable
                $x_j$ has the value \emph{true} in the given assignment we open all blocks of negative clauses using
                $x_j$ in the corresponding variable cell; if $x_j$ is \emph{false} we open all blocks of positive
                clauses using~$x_j$. Thus no blocks intersect in variable cells any more. If a clause contains more
                than one \emph{true} literal, we open all but one connection in its variable cells of \emph{true}
                literals. Since the assignment satisfies $\varphi$, we know that each clause block is opened exactly
                twice in its variable cells and thus forms a valid simple block region. Moreover, we place all shared
                elements in their home cells so that every block intersection contains an element and the
                embedding is strong. We call a strong embedding of $\{\mathcal P_0,\mathcal P_1\}_\varphi$ with the
                above properties a \emph{canonical embedding}.
	
                Now assume that $\{\mathcal P_0,\mathcal P_1\}_\varphi$ has a strong embedding. We know that the base
                grid has its unique embedding and that each block is embedded as a simple region that results from
                opening the intermediate embedding (with its two E-shapes linked through three variable cells) in
                exactly two cells. If the embedding is already canonical, we can immediately construct a satisfying
                variable assignment for $\varphi$: if a variable cell is crossed by clause blocks in $\mathcal Q_0$ we
                set the variable to \emph{true}, otherwise we set it to \emph{false}. Since every clause block is
                connected we know that this assignment satisfies all clauses. If the embedding is not canonical we show
                that it can be transformed into a canonical embedding as follows. In a non-canonical embedding it is
                possible that two blocks $B_i$ and $B_j$ intersect in a variable cell $x_k$ and have a shared element
                in their intersection face in the cell of $x_k$ rather than in the home cell of that element. This
                means, on the other hand, that in some shared home cell $\gamma$ of $B_i$ and $B_j$, say in the
                upper half, at least one of the two blocks is opened (as there is no more shared element to put into an
                intersection face). Thus the grid cell $\gamma$ splits the E-shaped block region of one or both blocks
                in the upper half into two disconnected components, meaning that each opened block crosses at least two
                variable cells in order to connect both components via the lower half. Hence we can safely split any
                block that is opened in $\gamma$ in the cell of variable $x_k$, re-connect it inside $\gamma$, and
                place the shared element of $B_i$ and $B_j$ into its home cell $\gamma$. This removes the block
                intersection in the cell of $x_k$. Once all block crossings within variable cells are removed, the
                resulting embedding is a canonical embedding and we can derive the corresponding satisfying variable
                assignment. \qed
\end{proof}

\section{Extensions and Conclusion} %
\label{sec:further_extensions_and_conclusion}

We have characterized three main embeddability
classes for pairs of partitions, which in fact form a strict
hierarchy, and we have shown \NP-completeness of
deciding strong embeddability.  
From a practical point of view the class of strong embeddings is of particular interest: it guarantees that every intersection between block regions is meaningful as it contains at least one element, but on the other hand allows blocks to cross, imposing less restrictions than full embeddings.
Interesting
subclasses of strong embeddings that further structure the space
between strong and full embeddability can be defined and we mention two of them.

\begin{figure}[htb]%
	\centering
		\subfigure[]{
			\includegraphics[width=0.49\textwidth]{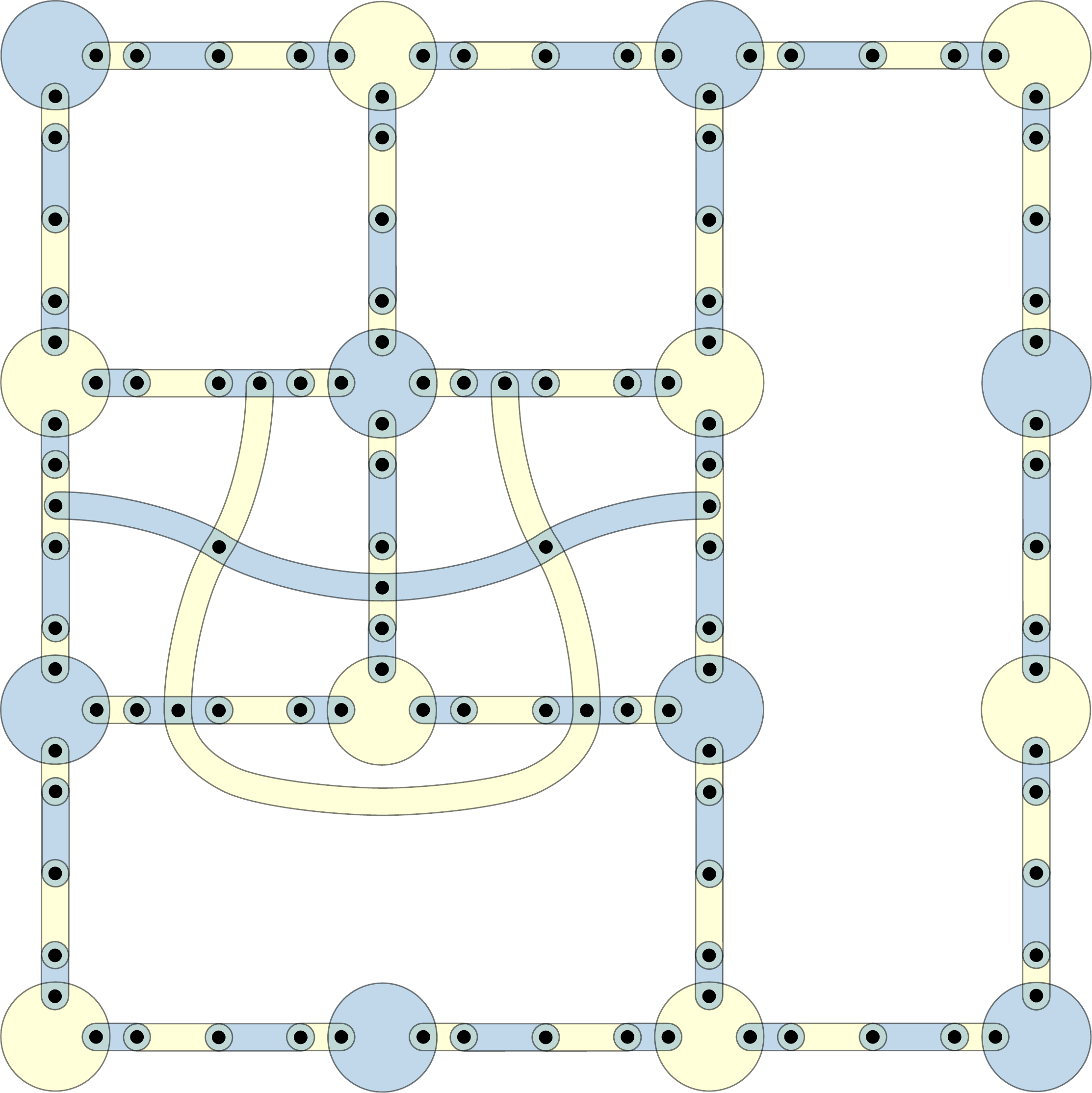}			
			\label{fig:2_clusterings_no_single_intersection_detailed}%
		}
		\begin{minipage}[b]{0.49\textwidth}
			\centering
		\subfigure[]{
			\includegraphics[width=0.25\textwidth]{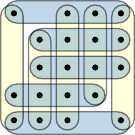}			
			\label{fig:plane_grid_strongly_embeddable}%
		}	\\
		\subfigure[]{
			\includegraphics[width=\textwidth]{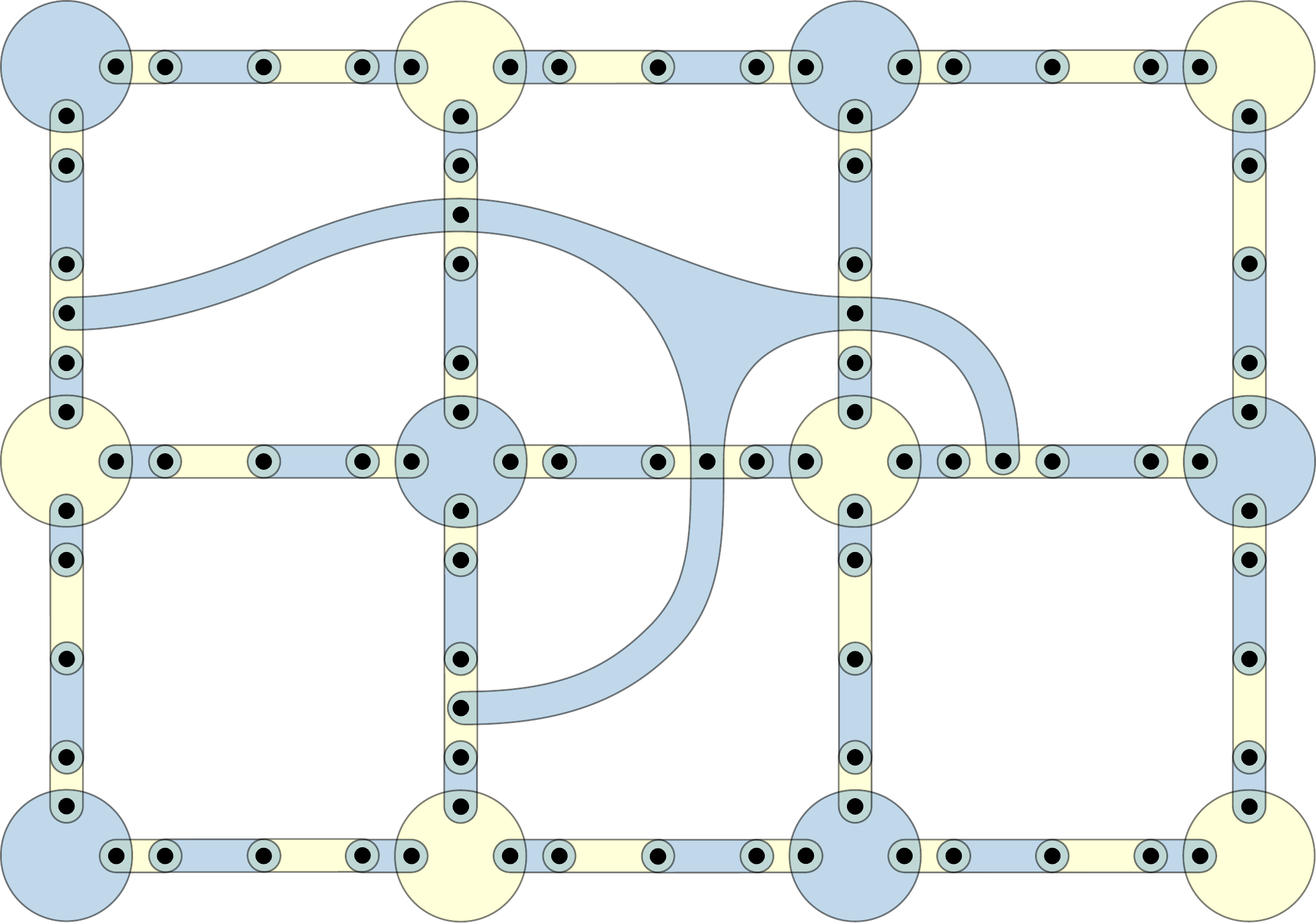}			
			\label{fig:2_clusterings_no_path_based_detailed}%
		}		
		\end{minipage}

		\caption{(a) Two partitions based on a uniquely embedded grid that have a strong embedding, but no single-intersection strong embedding. (b) A strong grid embedding. (c) Two partitions based on a uniquely embedded grid that have a single-intersection strong embedding but no strong grid embedding.}%
		\label{fig:2_clusterings_grid_au
gmentations}%

\end{figure}

In a strong embedding two blocks can intersect many times forming
disjoint intersection faces, whereas a full embedding permits only a
single connected intersection region for any pair of blocks (this is
also a common requirement for Euler diagrams~\cite{c-gdaevd-07}).  In
\emph{single-intersection strong embeddings} we adapt this
\emph{unique intersection region} condition of full embeddings, but
still permit that two blocks cross in the embedding.  This new
class is a true subclass of strong embeddings, see the example in
Fig.~\ref{fig:2_clusterings_no_single_intersection_detailed}.  It is
open  whether the corresponding decision problem is still
\NP-complete since our proof is based on the existence of multiple
intersection regions between pairs of blocks.

Another interesting subclass are \emph{strong grid embeddings}, in
which the blocks of $\mathcal P_0$ and $ \mathcal P_1$ are embedded as horizontal and vertical ribbons, respectively,
which intersect in a matrix-like fashion, see
Fig.~\ref{fig:plane_grid_strongly_embeddable}.  Obviously a strong
grid embedding is also a single intersection strong embedding, but
again strong grid embeddings form a true subclass, see
Fig.~\ref{fig:2_clusterings_no_path_based_detailed}.  It is easy to see that a pair of
partitions admits a strong grid embedding if and only if its block
intersection graph is a \emph{grid intersection
  graph}~\cite{hnz-gig-91}, i.e., an intersection graph of horizontal
and vertical segments in the plane.
Kratochv\'{i}l~\cite{k-spspcnpc-94} showed that
deciding whether a bipartite graph is a grid-intersection graph is
\NP-complete.  
All fully
embeddable pairs of partitions have a strong grid embedding: The
bipartite map of a fully embeddable pair of partitions is planar by
Theorem~\ref{thm:charfull} and immediately induces a planar bipartite
block intersection graph, which, according to Hartman et
al.~\cite{hnz-gig-91}, is a grid intersection graph.  This implies \NP-completeness of deciding strong grid embeddability. But as the
example of Fig.~\ref{sfg:strongnotfull} shows, not every instance with
a strong grid embedding admits a full embedding.

It is an interesting direction for future work to study the generalization of our embeddability concepts to $k > 2$ partitions. While weak embeddability and its properties extend readily to any number of partitions, it is less obvious how to generalize strong and full embeddability. One possibility is to require the properties in a pairwise sense; otherwise constraints for new types of faces in the contour graph belonging to more than one but less than $k$ block regions might be necessary. %

On the practical side, future work could be the design of algorithms
that find visually appealing simultaneous embeddings according to our
different embeddability classes.  Finally, if the partitions are
clusterings on a graph, one would ideally want to simultaneously draw
both the partitions and the underlying graphs.

{\small

}

\end{document}